\documentclass[12pt]{article}
\textheight 220 true mm
\textwidth 160 true mm
\topmargin -5mm
\oddsidemargin 0mm
\evensidemargin 0mm

\setcounter{totalnumber}{50}
\setcounter{topnumber}{50}
\setcounter{bottomnumber}{50}

\usepackage[dvips]{graphicx}
\usepackage{amsmath,amsfonts,amssymb,amsthm,color}

\usepackage{natbib}

\newtheorem{theorem}{Theorem}
\newtheorem{lemma}{Lemma}
\newtheorem{corollary}{Corollary}

\newtheorem{example}{Example}

\usepackage{bm}
\bmdefine{\Bt}{t}
\bmdefine{\BX}{X}
\bmdefine{\BY}{Y}
\bmdefine{\BZ}{Z}
\bmdefine{\BB}{B}
\bmdefine{\BM}{M}
\bmdefine{\BD}{D}
\bmdefine{\Bi}{i}
\bmdefine{\Bj}{j}
\bmdefine{\Bx}{x}
\bmdefine{\By}{y}
\bmdefine{\Bz}{z}
\bmdefine{\Bv}{v}
\bmdefine{\Bw}{w}
\bmdefine{\Bn}{n}
\bmdefine{\Ba}{a}
\bmdefine{\Bb}{b}
\bmdefine{\Bc}{c}
\bmdefine{\Be}{e}
\bmdefine{\Bu}{u}
\bmdefine{\Bp}{p}
\bmdefine{\Bzero}{0}
\bmdefine{\Bone}{1}

\title{A Markov Basis for Conditional Test of Common Diagonal Effect in 
  Quasi-Independence  Model for 
  Square Contingency Tables}
\author{
Hisayuki Hara\\
Department of Technology Management for Innovation\\
University of Tokyo \\ \and
Akimichi Takemura\\
Graduate School of Information Science and Technology\\
University of Tokyo\\ \and
Ruriko Yoshida\\
Department of Statistics\\
University of Kentucky}
\date{July 2008}

\begin{document}
\maketitle

\begin{abstract}
  In two-way contingency tables we sometimes find
  that frequencies along the diagonal cells are relatively larger
  (or smaller) compared to off-diagonal cells, particularly in
  square tables with the common categories for the rows and the
  columns.  In this case the quasi-independence model with an
  additional parameter for each of the diagonal cells is usually
  fitted to the data.  A simpler model than the quasi-independence
  model is to assume a common additional parameter for all the
  diagonal cells.  We consider testing the goodness of fit of the
  common diagonal effect by Markov chain Monte Carlo (MCMC) method.  We
  derive an explicit form of a Markov basis for performing the
  conditional test of the common diagonal effect.  Once a
  Markov basis is given, MCMC procedure can be easily implemented
  by techniques of algebraic statistics.  We illustrate the
  procedure with some real data sets.
\end{abstract}

\section{Introduction}
In this paper we discuss a conditional test of a common effect for
diagonal cells in two-way contingency tables.  Modeling diagonal
effects arises mainly in analyzing contingency tables with common categories
for the rows and the columns, although our approach is applicable to
general rectangular tables.  Many models have been proposed for
square contingency tables.  \citet{tomizawa06} gives a
comprehensive review of models for square contingency tables.
\citet{gao-kuriki-2006} discuss testing marginal homogeneity
against ordered alternatives.  

Goodness of fit tests of these models are usually performed
based on the large sample approximation to the null distribution of
test statistics.  However when a model is expressed in a log-linear
form of the cell probabilities, a conditional testing procedure
(e.g.\ the Fisher's exact test for $2\times 2$ contingency tables) can
be used.  Optimality of conditional tests is a well-known classical fact
\cite[Chapter 4]{lehmann-Romano}.
Also large sample approximation may be poor when expected cell 
frequencies are small (\cite{haberman-1988-jasa}).

\citet{sturmfels1996} and \citet{diaconis-sturmfels} developed an algebraic
algorithm for sampling from conditional distributions for a
statistical model of discrete exponential families. This algorithm is
applied to conditional tests through the notion of Markov bases.  In
the Markov chain Monte Carlo approach for testing statistical fitting
of the given model, a Markov basis is a set of moves connecting all
contingency tables satisfying the given margins.
Since then  many researchers have extensively studied the
structure of Markov bases for models in computational
algebraic statistics
(e.g. \cite{hosten-sullivant, dobra-2003bernoulli, 
dobra-sullivant, geiger-meek-sturmfels,sizetwo}).

It has been well-known that for two-way contingency tables with fixed
row sums and column sums the set of square-free moves of degree two of
the form 
\[
\begin{matrix}+1 & -1 \\ -1 &
    +1\end{matrix}
\]
constitutes a Markov basis. However when we impose an additional
constraint that the sum of cell frequencies of a subtable $S$ is also
fixed, then these moves do not necessarily form a Markov basis.  
In \cite{subtable} we gave a necessary and sufficient condition on $S$
so that the set of square-free moves of degree two forms a
Markov basis.  We called this problem  a {\em subtable sum problem}.
For the common diagonal effect model defined below in (\ref{eq:cdem}) $S$
is the set of diagonal cells.  We call this problem a 
{\em diagonal sum problem}.
By the result of \cite{subtable}
we know that the  set of square-free moves of degree two does not form a 
Markov basis for the diagonal sum problem.
In this paper we give an explicit form of a 
Markov basis for the two-way diagonal sum problem.
The Markov basis contains moves of degree three and four.

When the sum of cell frequencies of a subtable $S$ is fixed to zero,
then the frequency of each cell of $S$ has to be zero and the subtable
sum problem reduces to the structural zero case.  Contingency tables
with structural zero cells are called incomplete contingency
tables (\cite[Chapter 5]{BFH1975}).
From the viewpoint of Markov bases, the subtable sum problem
is a generalization of the problem concerning structural zeros.
Properties of Markov bases for incomplete tables are
studied in \cite{aoki-takemura-2005jscs,Huber-et-al-2006,
  Rapallo-2006}.

This paper is organized as follows; In Section \ref{sec:models}, we
introduce the common diagonal effect model as a submodel of the
quasi-independence model.  In Section \ref{Preliminaries}, we
summarize some preliminary facts on algebraic statistics and Markov
bases.  Section \ref{MB} shows a Markov basis for contingency tables
with fixed row sums, column sums, and the sum of diagonal cells.
Numerical examples with some real data sets are given in Section
\ref{sec:examples}.  We conclude this paper with some remarks in
Section \ref{sec:discussion}.

\section{Quasi-Independence model and the common diagonal effect model
  for two-way contingency tables}
\label{sec:models}

Consider an $R\times C$ two-way contingency table 
$\Bx=\{x_{ij}\}$, $i=1,\dots,R$, $j=1,\dots,C$, where
frequencies along the diagonal cells are relatively larger
compared to off-diagonal cells.  Table \ref{tab:1}
\cite[Section 10.5]{agre:2002}
shows agreement between two pathologists in their diagnoses of carcinoma.
\begin{table}[ht]
\begin{center}
\caption{Diagnoses of carcinoma}
\label{tab:1}
\begin{tabular}{c|cccc}
 & 1 & 2 & 3 & 4 \\ \hline
1&  22 &  2& 2 & 0\\
2& 5& 7& 14& 0\\
3& 0& 2& 36& 0 \\
4&  0& 1& 17& 10\\
\end{tabular}
\end{center}
\end{table}
We naturally see the tendency that two pathologist agree in their
diagnoses.  Usually the quasi-independence
model is fitted to this type of data.
In the quasi-independence model, the cell probabilities $\{p_{ij}\}$
are  modeled as
\begin{equation}
\label{eq:qimodel}
\log p_{ij}= \mu+\alpha_i + \beta_j + \gamma_i \delta_{ij}, 
\end{equation}
where  $\delta_{ij}$ is Kronecker's delta.  
In (\ref{eq:qimodel}) each diagonal cell $(i,i)$, $i=1, \dots,
\min(R,C)$, has its own free parameter $\gamma_i$. This implies that
in the maximum likelihood estimation each diagonal cell is perfectly
fitted:
\[
\hat p_{ii}=  \frac{x_{ii}}{n},
\]
where $n=\sum_{i=1}^R \sum_{j=1}^C x_{ij}$ is the total frequency.

As a simpler submodel of the quasi-independence model
we consider the null hypothesis
\begin{equation}
\label{eq:cdem}
H:\ \gamma=\gamma_i, \quad i=1,\dots,\min(R,C),
\end{equation}
in the quasi-independence model.  We call this model
a {\it common diagonal effect model} and abbreviate it 
as CDEM hereafter.  In CDEM the
tendency of the diagonal cells is expressed by a single parameter,
rather than perfect fits to diagonal cells.  
We present some numerical examples of 
testing CDEM against the
quasi-independence model in Section 
\ref{sec:examples}.

Both quasi-independence models and CDEM are usually applied to square
contingency tables, i.e., $R=C$. 
As shown in Section 4, however, Markov bases of CDEM does not
essentially depend on the assumption $R=C$. 
Therefore, in this article, we consider more general cases, i.e., $R \neq C$.

Under CDEM  the sufficient statistic
consists of the row sums, column sums and the sum of the diagonal
frequencies:
\[
x_{i+}=\sum_{j=1}^C x_{ij}, \ i=1,\ldots,R, \quad
    x_{+j}=\sum_{i=1}^R x_{ij}, \  j=1,\ldots,C, \quad
 x_S = \sum_{i=1}^{\min(R,C)} x_{ii}.
\]
We write the sufficient statistic as a column vector
\[
\Bt=(x_{1+}, \dots, x_{R+}, x_{+1}, \dots, x_{+C}, x_S)'.
\]
We also order the elements of $\Bx$ lexicographically and regard $\Bx$
as a column vector.
Then with an appropriate matrix $A_S$ consisting of 0's and 1's we can write
\[
\Bt =A_S  \Bx .
\]

\section{Preliminaries on Markov bases}
\label{Preliminaries}

In this section we summarize some preliminary definitions and
notations on Markov bases (\cite{diaconis-sturmfels}).  
By now Markov bases and their uses are discussed in many papers.  See
\citet{aoki-takemura-2005jscs} for example.

The set of contingency tables ${\bm x}$ sharing the same sufficient
statistic 
\[
{\cal F}_{\bm t}= \{ {\bm x} \ge 0 \mid {\bm t}=A_S {\bm x}
\}
\]
is called a $\bm t$-fiber.  
An integer table $\Bz$ is a {\it move} for $A_S$ if
$0=A_S\Bz$.  By adding a move $\Bz$ to $\Bx\in {\cal F}_{\Bt}$, we
remain in the same fiber ${\cal F}_{\Bt}$ provided that
$\Bx + \Bz$ does not contain a negative cell.
A finite set of moves
${\cal B}=\{ \Bz_1, \dots, \Bz_L\}$ is a {\em Markov basis}, if for every $\Bt$,
${\cal F}_\Bt$ becomes connected by $\cal B$, i.e.,  we can move all
over ${\cal F}_\Bt$ by adding or subtracting the moves from ${\cal B}$
to contingency tables in ${\cal F}_\Bt$.  

If $\Bz$ is a move then $-\Bz$ is a move as well.  
For convenience we add $-\Bz$ to $\cal B$ whenever 
$\Bz \in {\cal B}$ and only consider sign-invariant Markov bases in
this paper.
A Markov
basis $\cal B$ is minimal, if every proper sign-invariant subset of
$\cal B$ is no longer a Markov basis.
A move $\Bz$ is called {\em indispensable} if $\Bz$ has to belong to
every Markov basis.  Otherwise $\Bz$ is called dispensable.  

A move $\Bz$ has positive elements and negative elements.  Separating these
elements we write 
$\Bz= \Bz^+ - \Bz^-$, where 
$(\Bz^+)_{ij}=\max(\Bz_{ij},0)$ is the positive part and
$(\Bz^-)_{ij}=\max(-\Bz_{ij},0)$ is the negative  part of $\Bz$.
$\Bz^+$ and $\Bz^-$ belong to the same fiber.

We next discuss the notion of distance reduction by a move
(\cite{aoki-takemura-2003anz, takemura-aoki-2005bernoulli,subtable}).
When $\Bx + \Bz$ does not contain a negative cell, we say that
$\Bz$ is applicable to $\Bx$.  
$\Bz$ is applicable to $\Bx$ if and only if $\Bz^- \le \Bx$
(inequality for each element).
Given two contingency tables $\Bx, \By$
let 
$
|\Bx - \By| = \sum_{i,j} | \Bx_{ij} - \By_{ij}|
$
denote the $L_1$-distance between $\Bx$ and $\By$.
For $\Bx$ and $\By$  in the same fiber, we say that $\Bz$ reduces
their distance if $\Bz$ or $-\Bz$ is applicable to $\Bx$ or $\By$ and
the distance $|\Bx - \By|$ is reduced by the application, e.g.\ 
$
|\Bx +\Bz - \By| < |\Bx -\By|.
$
A sufficient condition for $\Bz$ to reduce the distance between
$\Bx$ and $\By$ is that at least one of the following four conditions
hold:
\begin{align*}
&\text{(i) } \Bz^+ \le \Bx, \ \min(\Bz^-,\By) \neq 0, \qquad
\text{(ii) } \Bz^+ \le \By, \ \min(\Bz^-,\Bx) \neq 0, \\ 
&\text{(iii) } \Bz^- \le \Bx, \ \min(\Bz^+,\By) \neq 0, \qquad
\text{(iv) } \Bz^- \le \By, \ \min(\Bz^+,\Bx) \neq 0,
\end{align*}
where  ``$\min$'' denotes element-wise minimum.
We can also think
of reducing the distance by a sequence of moves from $\cal B$.
Clearly a finite set of moves $\cal B$ is a Markov basis if 
for every two tables  $\Bx$, $\By$ from every fiber, we can reduce the
distance $|\Bx -\By|$ by a move $\Bz$ or a sequence of moves
$\Bz_1, \dots, \Bz_k$ from $\cal B$.
We use the argument of distance reduction for proving Theorem
\ref{thm:main} in the next section.

We end this section with a known fact for the structural zero
problem.  In order to state it  we introduce two types of
moves.  In these moves, the non-zero elements are located
in the complement $S^C$
of $S$, i.e., they are in the off-diagonal cells.
\begin{itemize}
\item Type I (basic moves in $S^C$  for $\max (R, C) \geq 4$):
\[
\begin{array}{ccc}
 & j & j'\\
i & +1 & -1\\
i' & -1 & +1  \\
\end{array}  
\]
where $i,i',j,j'$ are all distinct.

\item Type II  (indispensable moves of degree 3 in $S^C$ 
for $\min (R, C) \geq 3$):
\[
\begin{array}{cccc}
  & i & i'& i'' \\
i & 0 & +1 & -1\\
i' & -1 & 0 &  +1\\
i''& +1 & -1 & 0 \\
\end{array} 
\]
where three zeros are on the diagonal.
\end{itemize}

\begin{lemma}\label{strzero}
\cite[Section 5]{aoki-takemura-2005jscs}
Moves of Type I and II form a minimal Markov basis for the structural
zero problem along the diagonal, i.e.,
$x_{ii} = 0$, $i=1,\dots,\min(R,C)$.
\end{lemma}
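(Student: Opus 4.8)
The plan is to prove the statement in two parts: first that the moves of Type I and Type II connect every fiber, so that they form a Markov basis, and then that each of these moves is indispensable, which immediately yields minimality (a Markov basis all of whose elements are indispensable admits no proper sign-invariant Markov sub-basis).

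For the Markov basis property I would use the distance-reduction criterion recalled above. Take two distinct tables $\Bx,\By$ in a common fiber ${\cal F}_\Bt$ and set $\Bz=\Bx-\By$. Then $\Bz$ is a nonzero move: its row sums and column sums vanish, and since the diagonal cells are structural zeros we have $z_{ii}=0$, so $\Bz$ is supported off the diagonal. Starting from a cell $(i_1,j_1)$ with $z_{i_1j_1}>0$, the vanishing column sum forces a cell $(i_2,j_1)$ with $z_{i_2j_1}<0$, the vanishing row sum then forces a positive cell $(i_2,j_2)$, and so on; following these constraints traces an alternating closed walk through cells where $\Bx$ exceeds $\By$ and cells where $\By$ exceeds $\Bx$. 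From such a cycle I would extract a single move of ${\cal B}$ that strictly reduces $|\Bx-\By|$ while keeping all entries nonnegative and off the diagonal.

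The heart of the argument, and the step I expect to be the main obstacle, is the case analysis guaranteeing that a reducing move always lies in ${\cal B}$. Whenever the cycle contains a $2\times 2$ pattern on rows $\{i,i'\}$ and columns $\{j,j'\}$ with all four cells off-diagonal, the corresponding Type I move applies and reduces the distance by one of conditions (i)--(iv). The delicate situation is when every such short $2\times2$ chord is blocked because a row index coincides with a column index and the chord would land on the forbidden diagonal. I would show that this obstruction localizes on an index triple $\{a,b,c\}$ whose only feasible alternating off-diagonal configuration is exactly the Type II $3$-cycle, so that a Type II move then applies and strictly shortens the walk. The real work is in verifying exhaustively that the application neither creates a negative entry nor touches the diagonal and that the distance genuinely drops; the combinatorics of how the diagonal obstructs $2\times 2$ chords inside small principal submatrices is what must be handled with care.

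For minimality I would establish that every Type I and every Type II move is indispensable by inspecting the fiber of its positive part. For a Type I move the margins of $\Bz^+$ confine the support to the $2\times2$ block on $\{i,i'\}\times\{j,j'\}$ with all row and column sums equal to one, and the only two off-diagonal $0/1$ tables with these margins are $\Bz^+$ and $\Bz^-$; hence $\pm\Bz$ is the unique move joining them and $\Bz$ must appear in every Markov basis. For a Type II move the margins of $\Bz^+$ confine the support to the $3\times3$ principal block on $\{a,b,c\}$ with unit margins and forbidden diagonal, whose only two feasible tables are the two derangement patterns $\Bz^+$ and $\Bz^-$; note in particular that no Type I move can act inside such a block, since a $3\times3$ table contains no $2\times2$ sub-block avoiding the diagonal. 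In both cases the fiber of $\Bz^+$ has exactly two elements, so $\Bz$ is indispensable, and therefore the set of Type I and Type II moves is the unique minimal Markov basis.
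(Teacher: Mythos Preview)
The paper does not actually prove this lemma: it is quoted as a known result from \cite[Section 5]{aoki-takemura-2005jscs} and used as a black box in the proof of Theorem~\ref{thm:main}. So there is no in-paper argument to compare your proposal against.

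That said, your outline is sound. The indispensability half is essentially complete and correct: for a Type~I move the fiber of $\Bz^+$ is supported on the $2\times2$ block $\{i,i'\}\times\{j,j'\}$ with unit margins, and since $i,i',j,j'$ are distinct none of the four cells is diagonal, leaving exactly the two permutation patterns; for a Type~II move the fiber of $\Bz^+$ is supported on the principal $3\times3$ block with unit margins and forbidden diagonal, whose only nonnegative integer solutions are the two $3$-cycle derangements. Your pigeonhole remark that no $2\times2$ sub-block of a principal $3\times3$ block avoids the diagonal is also correct and gives a clean reason why Type~II moves cannot be replaced by Type~I moves. Since every move is indispensable and the set is a Markov basis, minimality (indeed uniqueness) follows.

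The connectivity half is where your proposal is, as you yourself flag, only a plan. The distance-reduction strategy is exactly right, and the alternating-cycle extraction is standard. What is not yet an argument is the claim that whenever every $2\times2$ chord of the cycle is blocked by the diagonal, the obstruction ``localizes on an index triple'' so that a Type~II move applies. Concretely you would need to show: if $z_{i_1 j_1}>0$, $z_{i_2 j_1}<0$, $z_{i_2 j_2}>0$ and the cell $(i_1,j_2)$ is diagonal (i.e.\ $i_1=j_2$), then one can continue the walk to a third row/column in such a way that either a Type~I move appears elsewhere or the three indices $\{i_1,i_2,j_1\}$ close up into a Type~II pattern with the required signs. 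This step is not automatic---the next cell in the walk could again hit a diagonal obstruction in a different triple---and the case analysis in \cite{aoki-takemura-2005jscs} is what handles it. If you intend to reprove the lemma rather than cite it, that case analysis must be written out.
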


\section{A Markov basis for the common diagonal effect model}\label{MB}

In order to describe a Markov basis  for the diagonal sum problem, 
we introduce four additional types of moves.
\begin{itemize}
\item Type III (dispensable  moves  of degree 3 for $\min (R, C) \geq 3$):
\[
\begin{array}{cccc}
  & i & i'& i'' \\
i & +1 & 0 & -1\\
i' & 0 & -1 &  +1\\
i''& -1 & +1 & 0\\
\end{array}
\]
Note that given three distinct indices $i,i',i''$, there are three 
moves in the same fiber:
\[
\begin{array}{ccc}
 +1 & 0 & -1\\
 0 & -1 &  +1\\
-1 & +1 & 0\\
\end{array} \qquad \ 
\begin{array}{ccc}
 +1 & -1 & 0\\
-1 & 0 &  +1\\
0 & +1 & -1
\\
\end{array} \qquad \ 
\begin{array}{ccc}
 0 & -1 & +1\\
-1 & +1 &  0\\
+1 & 0 & -1\\
\end{array} 
\]
Any two of these suffice for the connectivity of the fiber.
Therefore we can choose any two moves in this fiber for minimality of
Markov basis.

\item Type IV (indispensable  moves of degree 3 for $\max (R, C) \geq 4$):
      \[
      \begin{array}{cccc}
       & i & i'& j \\
       i & +1 & 0 & -1\\
       i' & 0 & -1 &  +1\\
       j'& -1 & +1 & 0\\
      \end{array}
      \]
      where $i$, $i'$, $j$, $j'$ are all distinct. 
      We note that Type IV is similar to Type III but unlike the moves
      in Type III, the moves of Type IV are indispensable.
\item Type V  (indispensable moves of degree 4 which are non-square free):
\[
\begin{array}{cccc}
  & j & j'& j'' \\
i & +1 & +1 & -2\\
i' & -1 & -1 &  +2\\
\end{array}
\]
where $i = j$ and $i' = j'$, i.e., two cells are on the diagonal.
Note that we also include the transpose of this type as Type V moves.

\item Type VI:  (square free indispensable moves of degree 4 for $\max
      (R, C) \geq 4$): 
\[
\begin{array}{ccccc}
  & j & j'& j''& j'''' \\
i & +1 & +1 & -1 & -1\\
i' & -1 & -1 &  +1 & + 1\\
\end{array}
\]
where $i = j$ and $i' = j'$.  Type VI includes the transpose of this type.
\end{itemize}

We now present the main theorem of this paper.

\begin{theorem}
\label{thm:main}
  The above moves of Types I-VI form a Markov basis for the diagonal sum
 problem with 
  $\min(R,C)\ge 3$ and 
  $\max(R,C)\ge 4$.
\end{theorem}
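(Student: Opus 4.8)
The plan is to prove that every fiber is connected by invoking the distance-reduction criterion recalled in Section \ref{Preliminaries}: it suffices to show that for any two distinct tables $\Bx,\By$ lying in a common fiber, some move of Types I--VI (possibly sign-reversed or transposed) is applicable to $\Bx$ or to $\By$ and strictly decreases $|\Bx-\By|$. As a preliminary step I would verify the routine fact that each of the six displayed patterns has vanishing row sums, vanishing column sums, and vanishing diagonal sum; this is immediate from the arrays and shows that all six types lie in $\ker A_S$, so they are legitimate moves for the diagonal sum problem.

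Set $\Bz=\Bx-\By\neq\Bzero$, so that $\Bz$ has zero row sums, zero column sums, and $\sum_i z_{ii}=0$. I would split into two cases according to the behaviour of $\Bz$ on the diagonal. In the first case, $z_{ii}=0$ for all $i$, so $\Bx$ and $\By$ agree on every diagonal cell; their off-diagonal parts then share the same row and column sums and lie in a common fiber of the structural-zero problem of Lemma \ref{strzero}. Since Types I and II act only on off-diagonal cells, the connectivity supplied by Lemma \ref{strzero} produces a sequence of such moves reducing $|\Bx-\By|$ without ever disturbing the matched diagonal, so this case is settled by the earlier result.

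In the second case $z_{ii}\neq 0$ for some $i$, and since $\sum_i z_{ii}=0$ there are diagonal cells $(a,a)$ with $z_{aa}>0$ and $(b,b)$ with $z_{bb}<0$. The goal is to exhibit a basis move $\Bw$ among Types III--VI whose support lies inside the support of $\Bz$ with matching signs (and, for the non-square-free Type V, with $|z_{ij}|\ge 2$ wherever $|w_{ij}|=2$); applying such a $\Bw$ with the correct sign both reduces the $L_1$-distance and decreases the diagonal discrepancy $\sum_i|z_{ii}|$ by $2$, so iterating eventually returns us to the first case. To locate $\Bw$ I would start at $(a,a)$ and trace an alternating sequence of nonzero cells of $\Bz$, using the vanishing of a row sum to step from a positive to a negative cell and the vanishing of a column sum to step back, until the cycle closes through a second nonzero diagonal cell of opposite sign. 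The combinatorial ways in which this alternating cycle can meet the diagonal correspond exactly to the four shapes: a closed $3\times3$ cycle skipping its third diagonal cell (Type III), a six-cell cycle on three rows and three columns whose third row- and column-indices differ (Type IV), and the two two-row shapes forced when the transfer between $(a,a)$ and $(b,b)$ must reuse the diagonal columns, yielding the doubled pattern (Type V) or its square-free splitting across a fourth column (Type VI). The hypotheses $\min(R,C)\ge3$ and $\max(R,C)\ge4$ are precisely what guarantee that enough distinct indices exist for whichever shape is required.

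The hard part will be the exhaustive bookkeeping in the second case: one must check that some admissible shape is \emph{always} available and \emph{always} applicable. The two genuinely delicate points are, first, ensuring that the diagonal-sum constraint never forces a move outside the list, and in particular that the invalid $2\times2$ moves on a diagonal column pair, which change the diagonal sum by $\pm2$, are correctly superseded by the degree-three and degree-four moves; and second, handling the magnitude condition for Type V, since that move can be peeled off only when the relevant off-diagonal entry of $\Bz$ has absolute value at least two, so that when it equals one the square-free routing of Type VI through a fourth column (or a combination with Types I--IV) must be used instead. Verifying that these subcases are jointly exhaustive, and that in each the chosen move satisfies one of the applicability conditions (i)--(iv), is where essentially all the effort of the proof resides.
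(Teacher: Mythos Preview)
Your high-level strategy matches the paper's: reduce to Lemma~\ref{strzero} when the diagonals of $\Bx$ and $\By$ agree, and otherwise locate a distance-reducing move of Types III--VI anchored at two diagonal cells of opposite sign in $\Bz=\Bx-\By$.  Where you diverge is in the organization of the second case.  Your ``trace an alternating cycle until it closes through a second diagonal cell'' heuristic does not by itself bound the length of the cycle or determine which of Types III--VI arises, and it is not how the paper proceeds.  The paper instead fixes $z_{11}>0$, $z_{22}<0$ and splits on the sign of the product $z_{12}z_{21}$.  If $z_{12}z_{21}\ge 0$, a negative cell in row~1 and a negative cell in column~1 immediately supply a Type~III or Type~IV move.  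If $z_{12}z_{21}<0$, the paper further asks whether there exist distinct columns $j,j'\ge 3$ with $z_{1j}z_{2j'}<0$: if not, the row sums force $|z_{13}|\ge 2$ and Type~V applies; if so, a short sign analysis of columns $3$ and $4$ produces a move of Type~III, IV, or VI.  This two-level sign dichotomy on the $2\times 2$ block $\{1,2\}\times\{1,2\}$ is the organizing device you are missing.

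One requirement in your outline should also be relaxed.  You ask that the move $\Bw$ have its support inside the support of $\Bz$ with matching signs everywhere.  The paper does not insist on this: in its Case~2--2 it applies a Type~III or Type~IV move in which \emph{one} cell (e.g.\ the cell in position $(3,4)$) has arbitrary sign in $\Bz$, the five matched cells still guaranteeing a strict $L_1$-decrease.  Demanding full sign-matching makes the exhaustion harder than necessary and is likely why your cycle-tracing picture feels underdetermined.  Your auxiliary observation that each of Types III--VI decreases $\sum_i|z_{ii}|$ by $2$ is correct but unused in the paper, which relies solely on $L_1$-distance reduction.
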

\begin{proof}
Let $X, Y$ be two tables in the same fiber.  If 
\[
x_{ii} = y_{ii}, \quad \forall i=1,\dots, \min(R,C),
\]
then the problem reduces to the structural zero problem and we can
use Lemma \ref{strzero}.  
Therefore we only need to consider the difference 
\[
X-Y=Z=\{z_{ij}\}, 
\]
where there exists at least one $i$ such that $z_{ii}\neq 0$.
Note that in this case there are two indices $i \neq i'$ such that
\[
z_{ii} > 0, \qquad  z_{i'i'} <  0,
\]
because the diagonal sum of $Z$ is zero.
Without loss of generality we let $i=1$, $i'=2$.
We prove the theorem by exhausting various sign patterns of
the differences in other cells and confirming the distance reduction 
by the moves of Types I-VI.
We distinguish two cases: $z_{12} z_{21} \ge 0$ and $z_{12} z_{21} < 0$.

\medskip\noindent
{\bf Case 1} ($z_{12} z_{21} \ge 0$): 
In this case without loss of generality
assume that $z_{12}\ge 0$, $z_{21} \ge 0$. Let $0+$ denote the  cell 
with non-negative value of $Z$ 
and let $*$ denote a cell with arbitrary value of $Z$.
Then $Z$ looks like
\[
\begin{array}{cccc}
+ & 0+ & * & \cdots \\
0+ & - & * & \cdots\\
* & * & *  & \cdots\\
\vdots & \vdots & \vdots & \ddots
\end{array}
\]
Note that there has to be a negative cell on the first row and on the
first column.  
Let $z_{1j}<0$, $z_{j'1}<0$.
Then $Z$ looks like
\[
\begin{array}{ccccccc}
  & 1 & 2  & \cdots & j & \cdots & \\
1 & + & 0+ & \cdots & - & \cdots & \\
2 & 0+ & - & \cdots &  * & \cdots & \\
 & \vdots & \vdots & \vdots &  \vdots & \cdots &\\
j' & - & * & \cdots & * & \cdots & \\
 & \vdots & \vdots & \vdots &  \vdots & \ddots &\\
\end{array}
\]
If $j=j'$, we can apply a Type III move to reduce the $L_1$ distance. 
If $j \neq j'$, we can apply a Type IV move to reduce the $L_1$
distance.   
This takes care of the case $z_{12}z_{21}\ge 0$.

\medskip
\noindent
{\bf Case 2} ($z_{12} z_{21} < 0$): 
Without loss of generality
assume that $z_{12} > 0$, $z_{21} <  0$.  Then $Z$ looks like
\[
\begin{array}{cccc}
+ & + & * & \cdots \\
- & - & * & \cdots \\
* & * & * & \cdots \\
\vdots & \vdots & \vdots & \ddots \\
\end{array}
\]
There has to be a negative cell on the first row and there has to be a
positive cell on the second row.
Without loss of generality we can let
$z_{13}<0$ and at least one of $z_{23}, z_{24}$ is positive. Therefore
$Z$ looks like
\begin{equation}
\label{eq:cases2122}
\begin{array}{cccccc}
+ & + & - & * &  *& \cdots \\
- & - & * & + & * & \cdots \\
* & * & * & * & * &\cdots \\
\vdots & \vdots & \vdots & \vdots& \vdots & \ddots \\
\end{array}
\quad \text{or}\quad 
\begin{array}{ccccc}
+ & + & - & * & \cdots \\
- & - & + & * & \cdots \\
* & * & * & * & \cdots \\
\vdots & \vdots & \vdots & \vdots& \ddots \\
\end{array}
\end{equation}

These two cases are not mutually exclusive.  We look at $Z$ as the
left pattern whenever possible. Namely, whenever we can find two
different columns $j, j'\ge 3$, $j\neq j'$ such that
$z_{1j}z_{2j'}<0$, then we consider $Z$ to be of the left pattern.  We
first take care of the case that $Z$ does not look like the left
pattern of (\ref{eq:cases2122}), i.e., there are no $j, j'\ge 3$,
$j\neq j'$, such that $z_{1j}z_{2j'}<0$.

\medskip
\noindent
{\bf Case 2-1} ($Z$ does not look like the left pattern of
(\ref{eq:cases2122})): 
If there exists some $j\ge 4$  such that $z_{1j} < 0$, then
in view of $z_{23}>0$ we have $z_{1j} z_{23}<0$ and $Z$ looks like the
left pattern of (\ref{eq:cases2122}).
Therefore we can assume
\[
z_{1j} \ge 0, \quad \forall j\ge 4.
\]
Similarly
\[
z_{2j} \le 0, \quad \forall j\ge 4
\]
and  $Z$ looks like
\[
\begin{array}{cccccc}
+ & + & - & 0+ & \cdots &0+ \\
- & - & + & 0- & \cdots &0-\\
* & * & * & * & \cdots &*\\
\vdots & \vdots & \vdots & \vdots& \vdots& \vdots \\
\end{array}
\]
Because the first row and the second row sum to zero, we have
\[
z_{13} \le -2, \quad z_{23} \ge 2.
\]
However then we can apply Type V move to reduce the $L_1$ distance.

\medskip
\noindent
{\bf Case 2-2} ($Z$ looks like the left pattern of
(\ref{eq:cases2122})): 
Suppose that there exists some $i\ge 3$ such that
$z_{i3}>0$.  If  $z_{33}>0$, then $Z$ looks like
\[
\begin{array}{cccccc}
+ & + & - & * &  *& \cdots \\
- & - & * & + & * & \cdots \\
* & * & + & * & * &\cdots \\
* & * & * & * & * &\cdots \\
\vdots & \vdots & \vdots & \vdots& \vdots & \ddots \\
\end{array}
\]
Then we can apply a type III move involving
\[
z_{12}>0,\ z_{13}<0, \ z_{22}<0, \ z_{24}>0, \ z_{33}>0, \ z_{34}:\text{arbitrary}
\]
and reduce the $L_1$ distance.  On the other hand if $z_{i3}>0$ for
$i\ge 4$, then 
$Z$ looks like
\[
\begin{array}{cccccc}
+ & + & - & * &  *& \cdots \\
- & - & * & + & * & \cdots \\
* & * & * & * & * &\cdots \\
* & * & + & * & * &\cdots \\
* & * & * & * & * &\cdots \\
\vdots & \vdots & \vdots & \vdots& \vdots & \ddots \\
\end{array}
\]
Then  we can apply a type IV move involving
\[
z_{11}>0,\ z_{13}<0, \ z_{21}<0, \ z_{24}>0, \ z_{i3}>0, \ z_{ii}:\text{arbitrary}
\]
and reduce the $L_1$ distance.  Therefore we only need to consider $Z$
which looks like
\[
\begin{array}{cccccc}
+ & + & - & * &  *& \cdots \\
- & - & * & + & * & \cdots \\
* & * & 0- & * & * &\cdots \\
\vdots & \vdots & \vdots & \vdots& \vdots & \cdots \\
* & * & 0- & * & * &\cdots \\
\end{array}
\]
Similar consideration for the fourth column of $Z$ forces
\[
\begin{array}{cccccc}
+ & + & - & * &  *& \cdots \\
- & - & * & + & * & \cdots \\
* & * & 0- & 0+ & * &\cdots \\
\vdots & \vdots & \vdots & \vdots& \vdots & \cdots \\
* & * & 0- & 0+ & * &\cdots \\
\end{array}
\]
However then because the third column and the fourth column sum to zero, 
we have $z_{23}>0$ and $z_{14}<0$ and $Z$ looks like
\[
\begin{array}{cccccc}
+ & + & - & - &  *& \cdots \\
- & - & + & + & * & \cdots \\
* & * & 0- & 0+ & * &\cdots \\
\vdots & \vdots & \vdots & \vdots& \vdots & \cdots \\
* & * & 0- & 0+ & * &\cdots \\
\end{array}
\]
Then we apply Type VI move to reduce the $L_1$ distance.

Now we have exhausted all possible sign patterns of $Z$ and 
shown that the $L_1$ distance can always be decreased by some move of
Types I-VI.
\end{proof}

Since moves of Type I, II, IV, V and VI are indispensable, we have the
following corollary.

\begin{corollary}
 \label{col:1}
 A minimal Markov basis for the diagonal sum problem with 
 $\min(R,C)\ge 3$ and $\max(R,C)\ge 4$ 
 consists of moves of Types I, II, IV, V, VI
 and two moves of Type III for each given triple  $(i,i',i'')$.
\end{corollary}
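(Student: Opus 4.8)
The plan is to obtain the corollary by combining Theorem~\ref{thm:main} with two indispensability computations. Theorem~\ref{thm:main} already supplies that Types I--VI form a Markov basis, so nothing about connectivity remains to be proved; the entire task is one of minimality. Concretely, I must determine which of these moves can never be deleted from a Markov basis, and exactly how far the collection of Type~III moves can be thinned.

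For the five types I, II, IV, V, VI I would prove indispensability directly, using the following self-contained observation. If $z$ is a move whose fiber $\mathcal{F}_{A_S z^+}$ consists of precisely the two tables $z^+$ and $z^-$, then $z$ lies (up to sign) in every Markov basis: within a two-element fiber the tables $z^+$ and $z^-$ can only be joined by the single move $z^- - z^+ = -z$, so any Markov basis must contain $\pm z$. It then suffices to compute, for one representative of each of these types, the complete fiber of its positive part and to verify that no third nonnegative integer table shares the same row sums, column sums, and diagonal sum. For Types I and II the fiber reduces to the $2\times 2$ or $3\times 3$ permutation matrices compatible with the (vanishing) diagonal sum, and a direct count leaves only $z^+$ and $z^-$; for Type IV one checks that among the six permutation matrices on the relevant row and column index sets exactly two attain the prescribed diagonal-sum value; and for the degree-four Types V and VI one solves the small linear system formed by the two row sums, the column sums, and the diagonal sum, again finding precisely the two solutions $z^+$ and $z^-$.

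The more delicate point is the Type~III count. Here I would fix a triple $(i,i',i'')$ and identify the single fiber shared by its three Type~III moves, showing it contains exactly three tables, namely the three $3\times 3$ permutation matrices with exactly one fixed point on the diagonal; the three Type~III moves are then precisely the pairwise differences of these tables. Connecting three tables requires at least two moves, since two vertices of a three-vertex graph cannot be spanned by fewer edges, and as already noted any two of the three moves do connect the fiber. Moreover no single Type~III move is forced, because the remaining two always reconnect the fiber in its absence. This simultaneously establishes that each Type~III move is dispensable and that one cannot use fewer than two of them per triple.

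Assembling these facts yields the corollary: every Markov basis must contain all moves of Types I, II, IV, V, VI together with, for each triple, at least two Type~III moves, while Theorem~\ref{thm:main} shows that this collection---with any two Type~III moves per triple---is already a Markov basis; hence it is minimal and the choice of the two Type~III moves is free. I expect the main obstacle to lie in the indispensability of the degree-four moves: the two-element-fiber check for Types V and VI genuinely relies on the diagonal-sum constraint, since without it the associated fibers would acquire further tables, so the verification must exploit all three kinds of margins together and cannot be reduced to the row and column sums alone.
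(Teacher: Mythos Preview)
Your approach matches the paper's: derive minimality from Theorem~\ref{thm:main} together with the indispensability of Types I, II, IV, V, VI and the three-element fiber structure for Type~III. You go further by supplying the explicit two-element-fiber verifications that the paper merely asserts in the type definitions, and your enumeration of the Type~III fiber as the three permutation matrices with exactly one fixed point is correct.

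There is one soft spot, which the paper shares. You write that ``Theorem~\ref{thm:main} shows that this collection---with any two Type~III moves per triple---is already a Markov basis,'' but Theorem~\ref{thm:main} as stated uses all three Type~III moves for each triple, and connectivity of the particular three-element Type~III fiber does not by itself guarantee connectivity of every fiber once one Type~III move is dropped. The paper's own justification is the single remark under Type~III (``Any two of these suffice for the connectivity of the fiber. Therefore we can choose any two moves in this fiber for minimality of Markov basis''), which is equally brief. The cleanest way to close this gap is to invoke the structure theorem for minimal Markov bases (as in \cite{takemura-aoki-2005bernoulli}): once all indispensable moves are included, minimality is achieved by choosing, for each dispensable fiber not connected by lower-degree moves, a spanning tree of its pairwise-difference graph. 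Since you have verified that the Type~III fiber has exactly three elements and that none of its pairwise differences coincide with a move of any other type, that theorem yields exactly the ``two per triple'' count.
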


\section{Numerical examples}
\label{sec:examples}

In this section with the Markov basis computed in previous sections, we will
experiment via MCMC method.  Particularly, we test the hypothesis 
of CDEM for a given data set.

Denote expected cell frequencies under the quasi-independence
model and CDEM by
\[
\hat m^{QI}_{ij}=n \hat p^{QI}_{ij}, \qquad
\hat m^{S}_{ij}=n \hat p^{S}_{ij}, 
\]
respectively.  These expected cell frequencies can be computed
via the iterative  proportional fitting (IPF).  
IPF for the quasi-independence model is explained in 
Chapter 5 of \cite{BFH1975}.  
IPF for  the common diagonal effect model is given as follows.
The superscript $k$ denotes the step count.
\begin{enumerate}
\item Set $m^{S,k}_{ij} = m_{ij}^{S,k-1} x_{i+}/m_{i+}^{S,k-1}$ 
for all $i,\, j$ and
set $k = k + 1$.  Then go to Step 2.
\item Set $m^{S,k}_{ij} = m_{ij}^{S,k-1} x_{i+}/m_{i+}^{S,k-1}$ for all $i,\, j$ and
set $k = k + 1$.  Then go to Step 3.
\item Set $m^{S,k}_{ii} = m_{ii}^{S,k-1} x_S/m^{S,k-1}_S$ for all $i=1,\dots,\min(R,C)$ and
$m^{S,k}_{ij} = m_{ij}^{S,k-1} (n-m^{S,k-1}_S)/(n-x_S)$ for all $i \not = j$. 
Then set $k = k + 1$ and go to Step 1.
\end{enumerate}
After convergence  we set 
\[
\hat m^S_{ij} = m_{ij}^{S,k} \text{ for all }i, \, j.
\]
We can initialize $m^{S,0}$ by 
\[
m_{ij}^{S,0} = n/(R\cdot C)\text{ for all } i, \, j.
\]

As the discrepancy measure from the 
hypothesis of the common diagonal model, 
we calculate ($2 \times$)  the log likelihood ratio statistic
\[
G^2 = 2 \sum_i \sum_j x_{ij} \log\frac{\hat m^{QI}_{ij}}{\hat m^S_{ij}}.
\]
for each sampled table ${\bm x}=\{x_{ij}\}$.  

In all experiments in this paper, we sampled 10,000 tables after 
8,000 burn-in steps.

\begin{example}
The first example is from Table \ref{tab:1} of Section \ref{sec:models}.
The value of $G^2$ for the observed
table in Table \ref{tab:1} is $13.5505$ and the corresponding asymptotic 
$p$-value is  $0.003585$ from the asymptotic distribution $\chi_3^2$.

A histogram of sampled tables via MCMC with a Markov basis for Table \ref{tab:1}
is in Figure \ref{histgram}.  We estimated the p-value $0.00379$ via MCMC
with the Markov basis computed in this paper.  Therefore CDEM model is
rejected at the significance level of 5\%.

\begin{figure}[ht!]
\begin{center}
     \includegraphics[scale= 0.6]{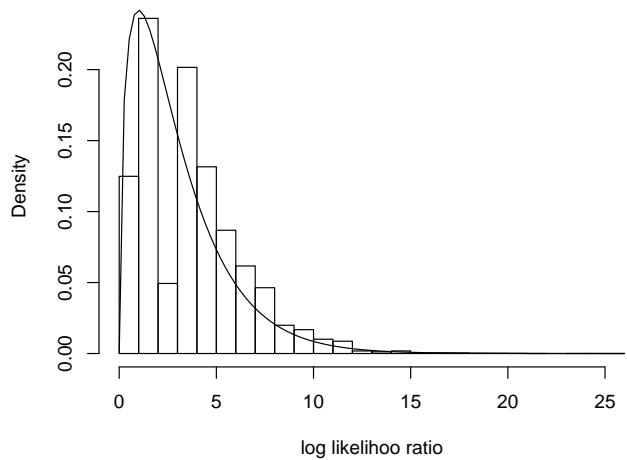}
 \end{center}
\caption{A histogram of sampled tables via MCMC with a Markov basis computed
for Table \ref{tab:1}.  The black line shows the asymptotic distribution $\chi_3^2$.}\label{histgram}
\end{figure}

\end{example}

\begin{example}
The second example is Table 2.12  from \cite{agre:2002}.  
\begin{table}[ht!]
\begin{center}
\caption{Married couples in Arizona}
\label{tab:2}
\begin{tabular}{c|cccc}
& never/occasionally& fairly often &very often &almost always\\\hline
never/occasionally & 7& 7& 2 &3\\
fairly often & 2& 8 &3 &7\\
very often & 1& 5& 4 &9\\
almost always & 2& 8& 9 &14\\
\end{tabular}
\end{center}
\end{table}
Table \ref{tab:2} summarizes responses of 91 married couples in Arizona about 
how often sex is fun. Columns represent wives' responses and rows represent
husbands' responses.

The value of $G^2$ for the observed
table in Table \ref{tab:2} is $6.18159$ and the corresponding asymptotic 
$p$-value is  $0.1031$ from the asymptotic distribution $\chi_3^2$.

A histogram of sampled tables via MCMC with a Markov basis for Table \ref{tab:2} 
is in Figure \ref{histgram2}.  We estimated the p-value $0.12403$  via MCMC
with the Markov basis computed in this paper.
Therefore CDEM model is
accepted at the significance level of 5\%.  We also see that $\chi_3^2$
approximates well with this observed data.
\begin{figure}[ht!]
\begin{center}
     \includegraphics[scale= 0.6]{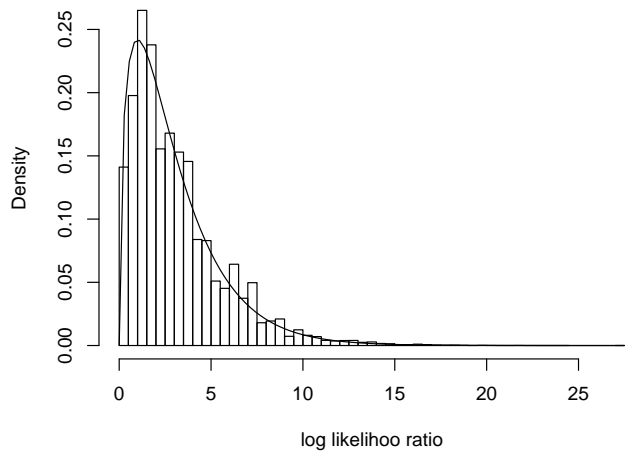}
 \end{center}
\caption{A histogram of sampled tables via MCMC with a Markov basis computed
for Table \ref{tab:2}.  The black line shows the asymptotic distribution $\chi_3^2$.}\label{histgram2}
\end{figure}
\end{example}

\begin{example}
 \label{ex:3}
The third example is Table 1 from \cite{diaconis-sturmfels}.  
\begin{table}[ht!]
\begin{center}
\caption{Relationship between birthday and death day}
\label{tab:3}
\begin{tabular}{c|cccccccccccc}
 & Jan & Feb & March & April & May & June & July & Aug & Sep & Oct & Nov & Dec \\\hline
Jan & 1& 0 &0 &0 &1 &2 &0 &0 &1 &0& 1 &0\\
Feb & 1& 0 &0 &1 &0& 0& 0 &0 &0 &1& 0 &2\\
March & 1& 0& 0& 0& 2& 1& 0 &0& 0& 0& 0& 1\\
April & 3& 0 &2 &0 &0 &0& 1 &0 &1 &3& 1& 1\\
May & 2 &1& 1& 1 &1 &1& 1& 1& 1 &1 &1& 0\\
June & 2& 0& 0 &0 &1& 0 &0 &0& 0& 0 &0& 0\\
July & 2& 0 &2 &1& 0& 0 &0 &0& 1& 1& 1& 2\\
Aug & 0& 0 &0 &3 &0& 0 &1 &0 &0 &1& 0 &2\\
Sep & 0& 0 &0 &1& 1& 0 &0 &0 &0& 0 &1 &0\\
Oct & 1& 1 &0& 2& 0 &0& 1& 0 &0& 1 &1& 0\\
Nov & 0& 1 &1& 1& 2 &0 &0& 2 &0 &1& 1& 0\\
Dec & 0& 1& 1& 0 &0 &0 &1& 0& 0& 0 &0 &0\\
\end{tabular}
\end{center}
\end{table}
Table \ref{tab:3} shows data gathered to test the hypothesis of association 
between birth day
and death day. The table records
the month of birth and death for 82 descendants of Queen Victoria. A widely
stated claim is that birthday-death day pairs are associated. Columns represent
the month of birth day and rows represent the month of death day.
As discussed in \cite{diaconis-sturmfels}, the Pearson's $\chi^2$
statistic for the usual independence model is 115.6 with 121 degrees
of freedom.  Therefore the usual independence model is accepted for
this data.  However, when CDEM is fitted,  the Pearson's $\chi^2$
becomes 111.5 with 120 degrees of freedom.  Therefore the fit
of CDEM is better than the usual independence model.

We now test CDEM against the quasi-independence model.
The value of $G^2$ for the observed
table in Table \ref{tab:3} is $6.18839$ and the corresponding asymptotic 
$p$-value is $0.860503$ from the asymptotic distribution $\chi_{11}^2$.

A histogram of sampled tables via MCMC with a Markov basis for Table \ref{tab:3}
is in Figure \ref{histgram3}.  We estimated the p-value $0.89454$ via MCMC
with the Markov basis computed in this paper.  There exists a large
discrepancy between the asymptotic distribution and the distribution
estimated by MCMC due to the sparsity of the table.
\begin{figure}[ht!]
\begin{center}
     \includegraphics[scale= 0.6]{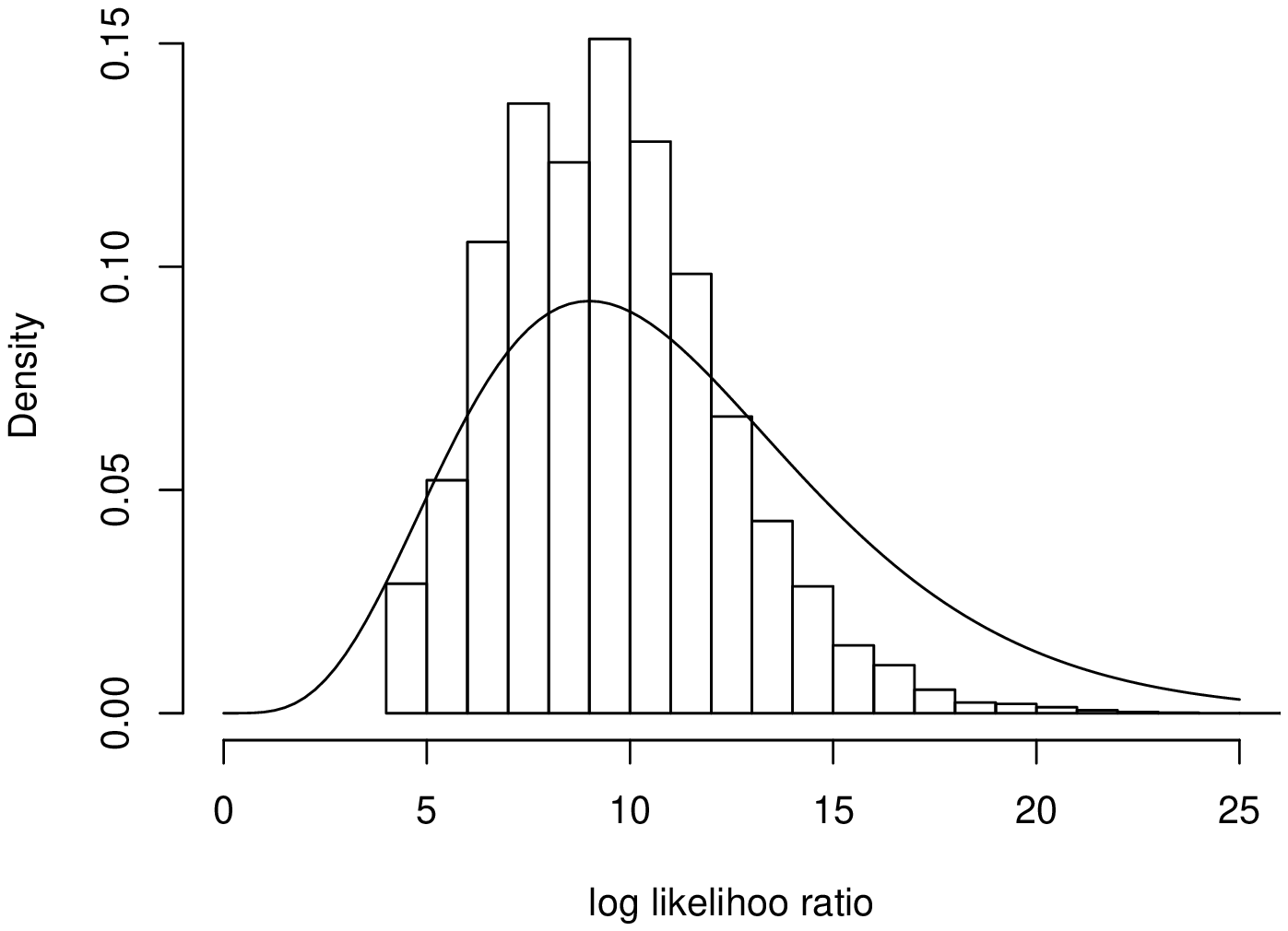}
 \end{center}
\caption{A histogram of sampled tables via MCMC with a Markov basis computed
for Table \ref{tab:3}.  The black line shows the asymptotic distribution $\chi_{11}^2$.}\label{histgram3}
\end{figure}
\end{example}

\section{Concluding remarks}
\label{sec:discussion}

In this paper we derived an explicit form of a Markov basis for the
diagonal sum problem.  With this Markov basis we showed that we can
easily run the conditional test of the common diagonal effect model.
As seen from Figure \ref{histgram3} in Example \ref{ex:3}, 
there may exist a large discrepancy between the asymptotic distribution 
and the distribution estimated via MCMC. 
This suggests the efficiency of the conditional test with a Markov
basis especially for a sparse table like Table \ref{tab:3}.

In \cite{subtable} we gave a necessary and sufficient condition on the
subtable $S$ so that the set of square-free moves of degree two forms
a Markov basis for $S$.  For a general $S$ it seems to be difficult to
explicitly describe a Markov basis.  For the diagonal $S$ the Markov
basis in Theorem \ref{thm:main} turned out to be relatively simple.
It would be helpful to consider some other special type of $S$ in
order to understand Markov bases for totally general $S$.

We have stated Theorem \ref{thm:main} for the case that $S$ contains
all the diagonal elements $(i,i)$, $i=1, \dots, \min(R,C)$.  Actually
our proof shows that our result can be generalized to $S$ which is a
subset of the diagonal cells.  Furthermore we can relabel the rows and
the columns.  Therefore the essential condition for the result in this
paper is that $S$ contains at most one cell in each row and each
column of the $R\times C$ table.

Theorem \ref{thm:main} was stated for the case
$\min(R,C)\ge 3$ and $\max(R,C)\ge 4$.  For smaller tables, we just
omit moves, which can not fit into small tables.  For completeness we
list these cases and give a Markov basis for each case.  
For avoiding triviality, we assume $\min(R,C) \ge 2$.  

\begin{enumerate}
\item $2\times 2$ : CDEM is the same as the saturated model and no
  degrees of freedom is left for the moves
\item $2\times 3$ : Type V moves form a Markov basis.
\item $2\times C$, $C\ge 4$: Moves of Type I, V and VI form a Markov
  basis.
\item $3\times 3$:  Moves of Type II, III and V form a Markov basis.
\end{enumerate}

It may be interesting and important to extend the subtable sum or/and 
diagonal sum problems to higher dimensional tables.  
However this seems to be difficult at this point and is left
for our future studies. 

\section*{Acknowledgment}  

The authors would like to thank Seth Sullivant for pointing out missing
elements in a Markov basis. 
The authors would also like to thank two anonymous referees for constructive
comments and suggestions.

\bibliographystyle{plainnat}
\bibliography{Hara-Takemura-Yoshida}
\end{document}